\newtheorem*{theorem}{Theorem}
\newtheorem*{lemma}{Lemma}
\begin{document}

\title[Article Title]{Tight upper bound of the maximal quantum violation of Gisin's elegant Bell inequality and its application in randomness certification}


\author[1,2]{\fnm{Dan-Dan} \sur{Hu}}

\author[1,2]{\fnm{Meng-Yan} \sur{Li}}

\author*[1,2]{\fnm{Fen-Zhuo \sur{Guo}}}\email{gfenzhuo@bupt.edu.cn}

\author[3]{\fnm{Yu-Kun} \sur{Wang}}

\author[4]{\fnm{Hai-Feng} \sur{Dong}}

\author[5]{\fnm{Fei} \sur{Gao}}

\affil[1]{\orgdiv{School of Science}, \orgname{Beijing University of Posts and Telecommunications}, \city{Beijing}, \postcode{100876}, \country{China}}

\affil[2]{\orgdiv{Henan Key Laboratory of Network Cryptography Technology}, \city{Zhengzhou}, \postcode{450001}, \country{China}}

\affil[3]{\orgdiv{Beijing Key Laboratory of Petroleum Data Mining}, \orgname{China University of Petroleum-Beijing},  \city{Beijing}, \postcode{102249}, \country{China}}

\affil[4]{\orgdiv{School of Instrumentation Science and Opto-Electronics Engineering}, \orgname{Beihang University},  \city{Beijing}, \postcode{100191}, \country{China}}

\affil[5]{\orgdiv{State Key Laboratory of Networking and Switching Technology}, \orgname{Beijing University of Posts and Telecommunications},  \city{Beijing}, \postcode{100876}, \country{China}}


\abstract{The violation of a Bell inequality implies the existence of nonlocality, making device-independent randomness certification possible. This paper derives a tight upper bound for the maximal quantum violation of Gisin's elegant Bell inequality (EBI) for arbitrary two-qubit states, along with the constraints required to achieve this bound. This method provides the necessary and sufficient conditions for violating the EBI for several quantum states, including pure two-qubit states and the Werner states. The lower bound of certifiable global randomness is analyzed based on the tight upper bound of the EBI for pure two-qubit states, with a comparison to the Clauser-Horne-Shimony-Holt (CHSH) inequality. The relationship between the noise level and the lower bound of certifiable global randomness with respect to the Werner states is also explored, and the comparisons with both the CHSH inequality and the chained inequality are given. The results indicate that when the state approaches a maximally entangled state within specific quantified ranges, the EBI demonstrates advantages over both the CHSH inequality and the chained inequality, potentially enhancing practical device-independent randomness generation rates.}

\keywords{Bell nonlocality, Gisin's elegant Bell inequality, device independent, quantum randomness}



\maketitle

\section{Introduction}\label{sec1}

The Bell inequality, which is satisfied by local hidden variable (LHV) theories but violated by quantum mechanics, serves as a pivotal tool for distinguishing classical and quantum phenomena\cite{bell1964einstein}. With the advancement of quantum information science, a variety of Bell-type inequalities have arisen, including well-known examples such as the Clauser-Horne-Shimony-Holt (CHSH) inequality  \cite{clauser1969proposed}, the tilted CHSH inequality\cite{acin2012randomness}, Gisin's elegant Bell inequality (EBI)\cite{myrvold2009bell}, the chained inequality\cite{braunstein1990wringing} and the network Bell inequality\cite{tavakoli2022bell}. A violation of Bell inequality suggests the existence of quantum nonlocality, a fundamental resource for device-independent information processing tasks such as quantum key distribution\cite{brunner2014bell,arnon2021upper,su2023monte,lukanowski2023upper} and quantunm random number generation\cite{pironio2013security,pironio2010random,acin2016optimal,andersson2018device,woodhead2020maximal,xiao2023device,liu2024quantifying}.

Most Bell inequalities saturate their maximal quantum violation with respect to maximally entangled states\cite{clauser1969proposed,acin2012randomness,myrvold2009bell,braunstein1990wringing,svetlichny1987distinguishing,mermin1990extreme,salavrakos2017bell,augusiak2019bell}. However, in the experimental realizations, the challenge lies in the interference of noise, making the preparation of maximally entangled states a non-trivial task\cite{aolita2015open,piccolini2024asymptotically}. Consequently, the tight upper bound of maximal quantum violation of Bell inequalities for arbitrary quantum states and its application have investigated. In the bipartite scenario, Acín et al.\cite{acin2012randomness} and  Xiao et al.\cite{xiao2023device} derived the maximal quantum violations of the tilted CHSH inequality for pure two-qubit states and the chained inequality for arbitrary two-qubit quantum states, respectively. Additionally, they quantified the extractable randomness based on these maximal quantum violations. In the multipartite scenario, Li et al.\cite{li2017tight} established the maximal quantum violation of Svetlichny inequality for arbitrary three-qubit quantum states. Furthermore, the maximal quantum violation of Mermin inequality for arbitrary three-qubit quantum states was demonstrated and the result was extended to four-qubit MABK inequality\cite{siddiqui2019tight}.

Various quantum information processing scenarios may employ different Bell inequalities. The EBI is another inequality that can reveal bipartite quantum nonlocality, characterized by its elegant geometric structure in measurements\cite{andersson2017self}.  Due to the relatively large number of measurement settings in the EBI, it may offer advantages in certain quantum tasks, such as randomness certification. In this paper, we obtain the maximal quantum violation of the EBI for arbitrary two-qubit quantum states and explore its application in randomness certification. We derive a tight upper bound on the maximal quantum violation of the EBI using the singular values of the state correlation matrix. Constraints and the optimal measurement strategy required to achieve this bound are then presented. Additionally, we apply our results to pure two-qubit states and the Werner states to demonstrate the validity of our approach, and provide lower bounds for certifiable global randomness using semidefinite programs (SDP)\cite{navascues2007bounding,navascues2008convergent} based on the tight upper bounds. The lower bound of the global randomness relative to parameter of pure two-qubit states is discussed, and a comparison is made with the CHSH inequality. Moreover, we explore the relationship between noise level of the Werner states and the lower bound of global randomness, contrasting it with both the CHSH inequality and chained inequality.

\section{Tight upper bound of the maximal quantum violation of EBI}\label{sec2}

The EBI  involvs two parties, Alice and Bob, as introduced by Gisin et al.\cite{myrvold2009bell}. Alice has three dichotomic measurement settings $A_k(k=1,2,3)$ with corresponding outcome denoted as $a\in\{0,1\}$, Bob has four dichotomic measurement settings $B_l(l=1,2,3,4)$ with corresponding outcome denoted as $b\in\{0,1\}$. The Gisin's elegant Bell operator reads
\begin{equation} \label{eq1}
	S=A_1(B_1+B_2-B_3-B_4)+A_2(B_1-B_2+B_3-B_4)+A_3(B_1-B_2-B_3+B_4),
\end{equation}
where $A_k=\boldsymbol{a_k}\cdot\boldsymbol{\sigma}=\sum_{i=1}^{3}{a_k}_i\sigma_i$,  $B_l=\boldsymbol{b_l}\cdot\boldsymbol{\sigma}=\sum_{j=1}^{3}{b_l}_j\sigma_j$, $\boldsymbol{\sigma}=(\sigma_1,\sigma_2,\sigma_3)$, $\sigma_i(i=1,2,3)$ is the Pauli matrix, $\boldsymbol{a_k}=(a_{k1},a_{k2},a_{k3})$ and $\boldsymbol{b_l}=(b_{l1},b_{l2},b_{l3})$ are three dimensional real unit vectors. The classical bound of EBI for LHV theories is 6. It has been shown in Ref.\cite{acin2016optimal} that the maximal quantum violation of the EBI amounts to $4\sqrt{3}$, and it is realized with the two-qubit maximally entangled state 
\begin{equation}
		\left | \phi_+  \right \rangle =\frac{1}{\sqrt{2} }(\left | 00 \right \rangle +\left | 11  \right \rangle ) ,
\end{equation}
and the following measurements
\begin{equation}
		A_1=\sigma_1, A_2=\sigma_2, A_3=\sigma_3, \label{3}
\end{equation}
\begin{subequations}
\begin{equation}
			B_1=\frac{1}{\sqrt{3}}(\sigma_1-\sigma_2+\sigma_3), B_3=\frac{-1}{\sqrt{3}}(\sigma_1+\sigma_2+\sigma_3),
\end{equation}
\begin{equation}
			B_2=\frac{1}{\sqrt{3}}(\sigma_1+\sigma_2-\sigma_3), B_4=\frac{1}{\sqrt{3}}(-\sigma_1+\sigma_2+\sigma_3).
\end{equation}   \label{4}
\end{subequations}

The arbitrary two-qubit state $\rho$ in the Hilbert space $H=C_2\otimes C_2$ can be represented as
\begin{equation}
		\rho=\frac{1}{4}[I\otimes I+\sum_{i=1}^{3}r_i\sigma_i\otimes I+I\otimes\sum_{j=1}^{3}s_j\sigma_j+\sum_{i,j=1}^{3}{t_i}_j\sigma_i\otimes\sigma_j],
\end{equation}
where $T=(t_{ij})$ with $t_{ij}=tr(\sigma_i\otimes\sigma_j \rho)$ is the correlation matrix of the state $\rho$.
	
We derive the tight upper bound on the maximal quantum violation of EBI for arbitrary two-qubit state $\rho$ as follows.
\begin{theorem}
For an arbitrary two-qubit state $\rho$, the maximal quantum violation Q(S) of the EBI satisfies
\begin{equation}
			Q(S)_\rho \equiv \max_{A_k,B_l}\left |\left \langle S  \right \rangle _\rho  \right |\le 4\sqrt{\lambda _1^2+\lambda _2^2+\lambda _3^2} , \label{6}
\end{equation}
where $\left \langle S  \right \rangle _\rho=tr(\rho S)$, $\lambda _1,\lambda _2,\lambda _3$ are the three largest singular values of the matrix T.
\end{theorem}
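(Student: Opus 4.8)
The plan is to eliminate the two parties' measurement directions in stages and reduce the problem to a spectral estimate for the correlation matrix $T$. Writing $A_k=\boldsymbol{a}_k\cdot\boldsymbol{\sigma}$ and collecting Bob's operators into the three signed combinations $C_k=\boldsymbol{c}_k\cdot\boldsymbol{\sigma}$ with $\boldsymbol{c}_1=\boldsymbol{b}_1+\boldsymbol{b}_2-\boldsymbol{b}_3-\boldsymbol{b}_4$, $\boldsymbol{c}_2=\boldsymbol{b}_1-\boldsymbol{b}_2+\boldsymbol{b}_3-\boldsymbol{b}_4$, $\boldsymbol{c}_3=\boldsymbol{b}_1-\boldsymbol{b}_2-\boldsymbol{b}_3+\boldsymbol{b}_4$, the identity $\langle(\boldsymbol{a}\cdot\boldsymbol{\sigma})\otimes(\boldsymbol{b}\cdot\boldsymbol{\sigma})\rangle_\rho=\boldsymbol{a}^{\top}T\boldsymbol{b}$ gives $\langle S\rangle_\rho=\sum_{k=1}^3\boldsymbol{a}_k^{\top}T\boldsymbol{c}_k$. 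Maximising over the unit vectors $\boldsymbol{a}_k$ is immediate, since each term is optimised by $\boldsymbol{a}_k\parallel T\boldsymbol{c}_k$, so that $Q(S)_\rho=\max_{\boldsymbol{b}_l}\sum_{k=1}^3\|T\boldsymbol{c}_k\|$.

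I would next pass to the singular value decomposition $T=U\Sigma V^{\top}$ with $\Sigma=\mathrm{diag}(\lambda_1,\lambda_2,\lambda_3)$. Because $U,V$ are orthogonal, $\|T\boldsymbol{c}_k\|=\|\Sigma V^{\top}\boldsymbol{c}_k\|$, and replacing each $\boldsymbol{b}_l$ by $V^{\top}\boldsymbol{b}_l$ preserves both the unit-norm constraint and the linear-combination structure defining the $\boldsymbol{c}_k$; hence without loss of generality $T=\Sigma$ and $Q(S)_\rho=\max\sum_k\|\Sigma\boldsymbol{c}_k\|$. Two facts I would extract from the combination structure are a norm budget and a Gram relation: collecting the $\boldsymbol{c}_k$ as columns of a matrix $C$ and the $\boldsymbol{b}_l$ as columns of $B$, one has $C=BM^{\top}$ with the coefficient matrix satisfying $MM^{\top}=4I_3$ and $M^{\top}M=4I_4-J_4$, whence $\sum_k\boldsymbol{c}_k\boldsymbol{c}_k^{\top}=CC^{\top}=4BB^{\top}-\boldsymbol{w}\boldsymbol{w}^{\top}$ with $\boldsymbol{w}=\sum_l\boldsymbol{b}_l$. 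In particular $\sum_k\|\boldsymbol{c}_k\|^2=\mathrm{tr}(CC^{\top})=16-\|\boldsymbol{w}\|^2\le 16$.

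From here the estimate runs $\sum_k\|\Sigma\boldsymbol{c}_k\|=\sum_k\|\boldsymbol{c}_k\|\,\|\Sigma\hat{\boldsymbol{c}}_k\|\le\bigl(\sum_k\|\boldsymbol{c}_k\|^2\bigr)^{1/2}\bigl(\sum_k\|\Sigma\hat{\boldsymbol{c}}_k\|^2\bigr)^{1/2}\le 4\bigl(\sum_k\|\Sigma\hat{\boldsymbol{c}}_k\|^2\bigr)^{1/2}$, where $\hat{\boldsymbol{c}}_k=\boldsymbol{c}_k/\|\boldsymbol{c}_k\|$. Since $\sum_k\|\Sigma\hat{\boldsymbol{c}}_k\|^2=\mathrm{tr}\bigl(\Sigma^2\sum_k\hat{\boldsymbol{c}}_k\hat{\boldsymbol{c}}_k^{\top}\bigr)$, the whole bound reduces to establishing the spectral condition $\sum_k\hat{\boldsymbol{c}}_k\hat{\boldsymbol{c}}_k^{\top}\preceq I$, for then $\sum_k\|\Sigma\hat{\boldsymbol{c}}_k\|^2\le\mathrm{tr}\,\Sigma^2=\lambda_1^2+\lambda_2^2+\lambda_3^2$ and the claimed $4\sqrt{\lambda_1^2+\lambda_2^2+\lambda_3^2}$ follows.

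The hard part is precisely this last condition: three unit vectors in $\mathbb{R}^3$ satisfy $\sum_k\hat{\boldsymbol{c}}_k\hat{\boldsymbol{c}}_k^{\top}\preceq I$ only when they are mutually orthogonal, and the combination vectors $\boldsymbol{c}_k$ are \emph{not} orthogonal for generic Bob settings — in contrast to the two CHSH combinations $\boldsymbol{b}_1\pm\boldsymbol{b}_2$, which are orthogonal automatically. I therefore expect the main work to be a proof that the maximum is attained at a configuration in which the $\boldsymbol{c}_k$ are mutually orthogonal and aligned with the singular directions of $T$ — equivalently, Bob's four vectors form a regular tetrahedron with $\boldsymbol{w}=0$ and Alice's $\boldsymbol{a}_k$ are orthonormal. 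This can be approached either by a KKT/Lagrange analysis of $\max\sum_k\|\Sigma\boldsymbol{c}_k\|$ on the product of spheres, or by an exchange argument showing any non-orthogonal candidate optimum can be rotated so as to increase the objective; the identity $MM^{\top}=4I_3$ is what forces the clean value at such a configuration, and pinning down these optimality conditions is exactly what produces the ``constraints required to achieve the bound'' announced in the theorem.
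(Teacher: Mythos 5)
Your reduction to $Q(S)_\rho=\max_{\boldsymbol{b}_l}\sum_k\lVert T\boldsymbol{c}_k\rVert$, the passage to $T=\Sigma$, and the norm budget $\sum_k\lVert\boldsymbol{c}_k\rVert^2=16-\lVert\sum_l\boldsymbol{b}_l\rVert^2\le 16$ are all correct, and that last step is actually cleaner than the paper's route (the paper reaches the same fact by posing $\min\sum_{i<j}\langle\boldsymbol{b}_i,\boldsymbol{b}_j\rangle$ as a semidefinite program and matching primal and dual values at $-2$, which is equivalent to your observation that $\lVert\sum_l\boldsymbol{b}_l\rVert^2\ge 0$). The genuine gap is the step you defer: that the maximum of $\sum_k\lVert\Sigma\boldsymbol{c}_k\rVert$ is attained with the $\boldsymbol{c}_k$ mutually orthogonal, so that $\sum_k\hat{\boldsymbol{c}}_k\hat{\boldsymbol{c}}_k^{\top}\preceq I$. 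This is not a technicality that a KKT or exchange argument will supply, because it is false when the singular values of $T$ are sufficiently unequal --- and with it the bound \eqref{6} itself fails. Take $T=\mathrm{diag}(1,0,0)$, the correlation matrix of the product state $\left|+\right\rangle\left|+\right\rangle$, and choose $\boldsymbol{b}_1=\hat{\boldsymbol{x}}$, $\boldsymbol{b}_2=\boldsymbol{b}_3=\boldsymbol{b}_4=-\hat{\boldsymbol{x}}$, $\boldsymbol{a}_k=\hat{\boldsymbol{x}}$. Then every $\boldsymbol{c}_k=2\hat{\boldsymbol{x}}$ and $\langle S\rangle_\rho=\sum_k\lVert T\boldsymbol{c}_k\rVert=6$ (the local deterministic value), whereas $4\sqrt{\lambda_1^2+\lambda_2^2+\lambda_3^2}=4$. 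The fully aligned, non-orthogonal configuration beats every orthogonal one here, so no optimality argument can rescue the spectral condition without extra hypotheses on $T$.

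It is worth recording that the paper's own proof founders at exactly the point you isolated. The first inequality in \eqref{9} bounds the three terms $\boldsymbol{a}_k^{\mathsf{T}}T\boldsymbol{c}_k$ by $\lambda_1\lvert\boldsymbol{c}_1\rvert$, $\lambda_2\lvert\boldsymbol{c}_2\rvert$ and $\lambda_3\lvert\boldsymbol{c}_3\rvert$ respectively, citing the Lemma; but the Lemma only yields $\lambda_{\max}\lVert\boldsymbol{x}\rVert\lVert\boldsymbol{y}\rVert$ for each term, and assigning the second and third singular values to the second and third terms presupposes that the $\boldsymbol{c}_k$ lie in the corresponding orthogonal singular directions --- precisely the orthogonality you flagged as the ``hard part.'' In the counterexample above that inequality would read $6\le 1\cdot 2+0\cdot 2+0\cdot 2=2$. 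So your proposal correctly locates the missing idea, but the honest conclusion is that the missing idea does not exist in the stated generality: the bound is valid and tight for correlation matrices with three equal (or comparable) singular values, such as the maximally entangled state, $\lvert\psi_\theta\rangle$ with $\theta$ near $\pi/4$, and the Werner states, and any correct version of the theorem must either restrict the class of states or replace the right-hand side of \eqref{6} by a quantity no smaller than the local bound $6$.
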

	
Before proving the theorem, we introduce the following lemma from Ref. \cite{xiao2023device}.
\begin{lemma}
Let A be an $n \times n$ matrix and $A=U^\mathbb{T}\Sigma V$ the singuar value decomposition of $A$. Suppose the column vectors of the matrices $U$ and $V$ are $u_1,...,u_n$ and $v_1,...,v_n$, respectively. For any vectors $x=\sum x_iu_i$ and $y=\sum y_iv_i$, we have   
\begin{equation}
			\left |\boldsymbol{x}^\mathsf{T} A \boldsymbol{y}  \right | \le \sigma _{\max}	\left \|\boldsymbol{x} \right \| \left \|\boldsymbol{y} \right \| \label{lemma},
\end{equation}
where $\sigma _{\max} $ is the largest singular value of the matrix A and $\left \| \cdot \right \| $ stands for the 2-norm. The equality holds if and only if $\boldsymbol{u_i}$ and $\boldsymbol{v_i}$ are the left and right singular vectors corresponding to the maximum singular value, $x_iy_i$ keep the same symbol, and $x_i$ and $y_i$ are proportional to $i=1,...,n$.
\end{lemma}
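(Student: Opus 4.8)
The plan is to exploit the singular value decomposition to diagonalize the bilinear form $\boldsymbol{x}^\mathsf{T} A \boldsymbol{y}$. Writing $\boldsymbol{x}$ and $\boldsymbol{y}$ in the left and right singular bases reduces $\left|\boldsymbol{x}^\mathsf{T} A \boldsymbol{y}\right|$ to the one-index sum $\left|\sum_i \sigma_i x_i y_i\right|$, after which the bound follows from a short chain of elementary estimates---the triangle inequality, the replacement of every $\sigma_i$ by $\sigma_{\max}$, and the Cauchy--Schwarz inequality. The orthonormality of the singular bases then converts the coordinate sums back into the Euclidean norms $\left\|\boldsymbol{x}\right\|$ and $\left\|\boldsymbol{y}\right\|$.

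First I would record the key identity. The singular value decomposition furnishes orthonormal bases $\{u_i\}$ and $\{v_i\}$, the left and right singular vectors, satisfying $A v_j = \sigma_j u_j$ together with $u_i^\mathsf{T} u_j = \delta_{ij}$ and $v_i^\mathsf{T} v_j = \delta_{ij}$. Substituting the expansions $\boldsymbol{x}=\sum_i x_i u_i$ and $\boldsymbol{y}=\sum_j y_j v_j$ gives
\begin{equation}
\boldsymbol{x}^\mathsf{T} A \boldsymbol{y} = \sum_{i,j} x_i y_j\, u_i^\mathsf{T} A v_j = \sum_{i,j} x_i y_j\, \sigma_j\, u_i^\mathsf{T} u_j = \sum_i \sigma_i x_i y_i .
\end{equation}
The same orthonormality yields $\left\|\boldsymbol{x}\right\|^2=\sum_i x_i^2$ and $\left\|\boldsymbol{y}\right\|^2=\sum_i y_i^2$, so the $x_i$ and $y_i$ are genuine orthonormal coordinates of $\boldsymbol{x}$ and $\boldsymbol{y}$.

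With the identity in hand, the inequality is obtained from
\begin{equation}
\left|\sum_i \sigma_i x_i y_i\right| \le \sum_i \sigma_i \left|x_i y_i\right| \le \sigma_{\max}\sum_i \left|x_i y_i\right| \le \sigma_{\max}\,\left\|\boldsymbol{x}\right\|\,\left\|\boldsymbol{y}\right\| ,
\end{equation}
where the last step is Cauchy--Schwarz applied to the nonnegative vectors $(\left|x_i\right|)$ and $(\left|y_i\right|)$. For the equality characterization I would read off the conditions inequality by inequality: equality in the triangle step forces every nonzero product $x_i y_i$ to carry the same sign; equality in the middle step forces $x_i y_i = 0$ whenever $\sigma_i < \sigma_{\max}$, so that only indices whose $u_i,v_i$ belong to the maximal singular value contribute; and equality in Cauchy--Schwarz forces $(\left|x_i\right|)$ to be proportional to $(\left|y_i\right|)$. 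Together these are precisely the stated conditions.

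I expect the main obstacle to lie in the equality analysis rather than in the inequality itself. One must treat a degenerate maximal singular value with care, reading ``the singular vectors corresponding to $\sigma_{\max}$'' as a basis of the whole maximal singular subspace, and one must check that the sign and proportionality requirements are mutually consistent---combined, they assert that on the contributing indices $x_i$ and $y_i$ are proportional with a single common sign. A secondary point is to confirm that the convention $A=U^\mathsf{T}\Sigma V$ pairs $u_i$ with $v_i$ so that $A v_j = \sigma_j u_j$ holds as used above; once this pairing is fixed, the diagonalization is immediate and the rest is routine.
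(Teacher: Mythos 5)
The paper does not actually prove this lemma: it is stated without proof and imported from Ref.~\cite{xiao2023device}, so there is no in-paper argument to compare against. Your proof is correct and is the standard one that the cited reference uses in substance: diagonalize the bilinear form via the SVD to get $\boldsymbol{x}^\mathsf{T} A \boldsymbol{y}=\sum_i \sigma_i x_i y_i$, then apply the triangle inequality, the bound $\sigma_i\le\sigma_{\max}$, and Cauchy--Schwarz, reading off the equality case inequality by inequality. Two of your side remarks are genuinely valuable rather than routine. First, you were right to flag the convention: as printed, $A=U^{\mathsf{T}}\Sigma V$ with $u_i,v_i$ the \emph{columns} of $U$ and $V$ does not yield $Av_j=\sigma_j u_j$; the intended reading is the standard $A=\sum_i\sigma_i u_i v_i^{\mathsf{T}}$, and your pairing fixes this. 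Second, your equality analysis is more careful than the lemma's own statement, which is garbled (``$x_iy_i$ keep the same symbol, and $x_i$ and $y_i$ are proportional to $i=1,\dots,n$''); your version --- support of $\boldsymbol{x}$ and $\boldsymbol{y}$ confined to the maximal singular subspace, all nonzero products $x_iy_i$ of a single sign, and $(\left|x_i\right|)$ proportional to $(\left|y_i\right|)$, handled correctly even when $\sigma_{\max}$ is degenerate --- is the right reading, and you correctly observe that the Cauchy--Schwarz equality condition ranges over all indices, which together with the middle step forces both vectors into the maximal subspace. In short: the proposal is correct, supplies a proof the paper omits, and even repairs the statement's imprecisions.
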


\begin{proof}[Proof of the theorem]
Let $\left \langle A_k B_l \right \rangle $ denote the expectation value of the product of outcomes of Alice’s $k$th and Bob’s $l$th settings, then
\begin{align}
			\left \langle A_k B_l \right \rangle _\rho &= tr\left [\rho \left (\boldsymbol{a_k}\cdot\boldsymbol{\sigma}\otimes\boldsymbol{b}_l\cdot\boldsymbol{\sigma}  \right )  \right ] =\sum_{i,j=1}^{3}a_{ki}b_{lj}tr\left (\rho\sigma_i\otimes\sigma_j  \right ) \nonumber\\ &=\sum_{i,j=1}^{3}a_{ki}b_{lj}t_{ij}=\boldsymbol{a}_k^\mathsf{T}T\boldsymbol{b}_l.
\end{align}
The expectation value of Gisin's elegant Bell operator is thus given by
\begin{align}
			\left \langle S  \right \rangle _\rho &=\boldsymbol{a}_1^\mathsf{T}T\left (\boldsymbol{b}_1+\boldsymbol{b}_2-\boldsymbol{b}_3-\boldsymbol{b}_4  \right ) +\boldsymbol{a}_2^\mathsf{T}T\left (\boldsymbol{b}_1-\boldsymbol{b}_2+\boldsymbol{b}_3-\boldsymbol{b}_4  \right )+\boldsymbol{a}_3^\mathsf{T}T\left (\boldsymbol{b}_1-\boldsymbol{b}_2-\boldsymbol{b}_3+\boldsymbol{b}_4  \right ) \nonumber\\
			&\le \lambda_1\left |\boldsymbol{b}_1+\boldsymbol{b}_2-\boldsymbol{b}_3-\boldsymbol{b}_4  \right | +\lambda_2\left |\boldsymbol{b}_1-\boldsymbol{b}_2+\boldsymbol{b}_3-\boldsymbol{b}_4  \right | + \lambda_3\left |\boldsymbol{b}_1-\boldsymbol{b}_2-\boldsymbol{b}_3+\boldsymbol{b}_4  \right | \nonumber\\
			&\le\sqrt{\lambda_1^2+\lambda_2^2+\lambda_3^2}\nonumber\\
			&\quad\cdot\sqrt{\left |\boldsymbol{b}_1+\boldsymbol{b}_2-\boldsymbol{b}_3-\boldsymbol{b}_4  \right |^2+\left |\boldsymbol{b}_1-\boldsymbol{b}_2+\boldsymbol{b}_3-\boldsymbol{b}_4  \right |^2+\left |\boldsymbol{b}_1-\boldsymbol{b}_2-\boldsymbol{b}_3+\boldsymbol{b}_4  \right |^2} \nonumber\\
			&=\sqrt{\lambda_1^2+\lambda_2^2+\lambda_3^2}  \sqrt{12-2\left (\left \langle \boldsymbol{b_1},\boldsymbol{b_2} \right \rangle + \left \langle \boldsymbol{b_1},\boldsymbol{b_3} \right \rangle +\cdots +\left \langle \boldsymbol{b_3},\boldsymbol{b_4} \right \rangle \right ) }, \label{9}
		\end{align}
the first inequality comes from Lemma, and the second inequality is due to the Cauchy-Schwarz inequality.
		
To estimate the maximum value of \eqref{9}, we consider the following optimization problem:
	\begin{equation}
			\min\limits_{\boldsymbol{b}_i\in \mathbb{R}^3} \left ( \left \langle \boldsymbol{b_1},\boldsymbol{b_2} \right \rangle+\left \langle \boldsymbol{b_1},\boldsymbol{b_3} \right \rangle +\left \langle \boldsymbol{b_1},\boldsymbol{b_4} \right \rangle + \left \langle \boldsymbol{b_2},\boldsymbol{b_3} \right \rangle +\left \langle \boldsymbol{b_2},\boldsymbol{b_4} \right \rangle +\left \langle \boldsymbol{b_3},\boldsymbol{b_4} \right \rangle  \right ) . \label{10}
	\end{equation} 
		
Let $M=[m_{ij}]$ be the Gram matrix of the vectors $\left \{\boldsymbol{b_1},\boldsymbol{b_2},\boldsymbol{b_3},\boldsymbol{b_4}\right\} \in \mathbb{R}^3 $ with respect to the inner product
		
		\begin{equation} 
			M=\begin{pmatrix}
				\left \langle \boldsymbol{b_1},\boldsymbol{b_1} \right \rangle& \left \langle \boldsymbol{b_1},\boldsymbol{b_2} \right \rangle &  \left \langle \boldsymbol{b_1},\boldsymbol{b_3} \right \rangle & \left \langle \boldsymbol{b_1},\boldsymbol{b_4} \right \rangle\\
				\left \langle \boldsymbol{b_2},\boldsymbol{b_1} \right \rangle&  \left \langle \boldsymbol{b_2},\boldsymbol{b_2} \right \rangle &  \left \langle \boldsymbol{b_2},\boldsymbol{b_3} \right \rangle & \left \langle \boldsymbol{b_2},\boldsymbol{b_4} \right \rangle\\
				\left \langle \boldsymbol{b_3},\boldsymbol{b_1} \right \rangle&  \left \langle \boldsymbol{b_3},\boldsymbol{b_2} \right \rangle & \left \langle \boldsymbol{b_3},\boldsymbol{b_3} \right \rangle & \left \langle \boldsymbol{b_3},\boldsymbol{b_4} \right \rangle\\
				\left \langle \boldsymbol{b_4},\boldsymbol{b_1} \right \rangle&\left \langle \boldsymbol{b_4},\boldsymbol{b_2} \right \rangle  & \left \langle \boldsymbol{b_4},\boldsymbol{b_3} \right \rangle &\left \langle \boldsymbol{b_4},\boldsymbol{b_4} \right \rangle
			\end{pmatrix},
		\end{equation}
where all $\boldsymbol{b}_i$ are unit vectors and $M\succeq 0$. Define
		\begin{equation}
			W=\begin{pmatrix}
				0 &1  & 1 & 1\\
				1&  0& 1 &1 \\
				1& 1 & 0 &1 \\
				1& 1 &1 &0
			\end{pmatrix}.
		\end{equation}
We reformulate the optimization problem \eqref{10} into the following SDP:
		\begin{equation}
			\begin{aligned}
				&\min_{M}  \quad \frac{1}{2}tr\left ( MW \right ), \\
				&s.t. \quad M\succeq 0,m_{ii}=1,\forall i.
			\end{aligned}
		\end{equation}
The dual problem of the SDP can be derived as follows:
		\begin{equation}
			\begin{aligned}
				&\max_{v}  \quad tr(diag(v)), \\
				&s.t. \quad \frac{1}{2}W-diag(v) \succeq 0.
			\end{aligned}
		\end{equation}
		
We choose unit vectors $\boldsymbol{b}_i$ to be of the form
		\begin{subequations}
			\begin{equation}
				\boldsymbol{b_1}=\frac{1}{\sqrt{\lambda_1^2+\lambda_2^2+\lambda_3^2}}\left (\lambda_1, -\lambda_2, \lambda_3 \right ) ^\mathsf{T},
			\end{equation}
			\begin{equation}
				\boldsymbol{b_2}=\frac{1}{\sqrt{\lambda_1^2+\lambda_2^2+\lambda_3^2}}\left (\lambda_1, \lambda_2, -\lambda_3 \right ) ^\mathsf{T},
			\end{equation}
			\begin{equation}
				\boldsymbol{b_3}=\frac{-1}{\sqrt{\lambda_1^2+\lambda_2^2+\lambda_3^2}}\left (\lambda_1, \lambda_2, \lambda_3 \right ) ^\mathsf{T},
			\end{equation}
			\begin{equation}
				\boldsymbol{b_4}=\frac{1}{\sqrt{\lambda_1^2+\lambda_2^2+\lambda_3^2}}\left (-\lambda_1, \lambda_2, \lambda_3 \right ) ^\mathsf{T},
			\end{equation}
		\end{subequations}
where $\lambda _1,\lambda _2,\lambda _3$ are the three largest singular values of the matrix T. We can see that
		\begin{subequations} \label{16}
			\begin{equation}
				\left \langle \boldsymbol{b_1},\boldsymbol{b_2}  \right \rangle =\left \langle \boldsymbol{b_3},\boldsymbol{b_4}  \right \rangle  =\frac{1}{\sqrt{\lambda_1^2+\lambda_2^2+\lambda_3^2}}\left ( \lambda_1^2-\lambda_2^2- \lambda_3^2  \right ) ,
			\end{equation}
			\begin{equation}
				\left \langle \boldsymbol{b_1},\boldsymbol{b_3}  \right \rangle =\left \langle \boldsymbol{b_2},\boldsymbol{b_4}  \right \rangle  =\frac{1}{\sqrt{\lambda_1^2+\lambda_2^2+\lambda_3^2}}\left (-\lambda_1^2+\lambda_2^2- \lambda_3^2  \right ),
			\end{equation}
			\begin{equation}
				\left \langle \boldsymbol{b_1},\boldsymbol{b_4}  \right \rangle =\left \langle \boldsymbol{b_2},\boldsymbol{b_3}  \right \rangle  =\frac{1}{\sqrt{\lambda_1^2+\lambda_2^2+\lambda_3^2}}\left (-\lambda_1^2-\lambda_2^2+ \lambda_3^2  \right ).
			\end{equation}
		\end{subequations}
Therefore, the value of the primal problem is given by
		\begin{equation}
			p'=-2,
		\end{equation}
the corresponding Gram matrix $M'$, whose elements are given by equation \eqref{16}, and it evidently satisfies the constraint $M'\succeq 0,m_{ii}'=1,\forall i$.
		
Next, taking the 4-dimensional vector
		\begin{equation}
			v'=- \frac{1}{2}\left (1,1,1,1  \right ) , \label{18}
		\end{equation}
we find that the matrix $W$ has a minimum eigenvalue of $-1$, and the matrix $diag(v')$ has a maximum eigenvalue of $-\frac{1}{2}$. Noting that $W$ and $diag(v')$ are both Hermitian, we have
		\begin{align}
			\lambda_{\min}\left ( \frac{1}{2}W-diag(v') \right ) & \ge	\lambda_{\min}\left ( \frac{1}{2}W\right )-\lambda_{\max}\left ( diag(v') \right )  \nonumber \\
			&=-\frac{1}{2} - \left (-\frac{1}{2}  \right ) =0, \label{19}
		\end{align}
where $\lambda_{min}/\lambda_{max}(M)$ is the minimum/maximum eigenvalue of matrix $M$. Thus, the $diag(v')$ defined by \eqref{18} ensures that the constraint $\frac{1}{2}W - diag(v') \succeq 0 $ is satisfied. The value of the dual problem is
		\begin{equation}
			d'=-2.
		\end{equation}
		
Since $d'=p'$, strong duality holds, implying that $M'$ and $v'$ are the optimal solutions of the primal and dual problems, respectively. Finally, from \eqref{9}, we have $\left \langle S  \right \rangle _\rho \le 4\sqrt{\lambda_1^2+\lambda_2^2+\lambda_3^2}$, proving the Theorem.
	\end{proof}
  
\section*{Tightness of the upper bound}
    
Let us now discuss the conditions where upper bound \eqref{6} saturates.  From the lemma, the first inequality in \eqref{9} is saturated if $\boldsymbol{b_1}+\boldsymbol{b_2}-\boldsymbol{b_3}-\boldsymbol{b_4}$,  $\boldsymbol{b_1}-\boldsymbol{b_2}+\boldsymbol{b_3}-\boldsymbol{b_4}$, $\boldsymbol{b_1}-\boldsymbol{b_2}-\boldsymbol{b_3}+\boldsymbol{b_4}$ are the singular vectors corresponding to the three largest singular values $\lambda_1$, $\lambda_2$, $\lambda_3$ and $\boldsymbol{a_1}=\frac{T\left (\boldsymbol{b_1}+\boldsymbol{b_2}-\boldsymbol{b_3}-\boldsymbol{b_4}\right ) }{\left | T\left (\boldsymbol{b_1}+\boldsymbol{b_2}-\boldsymbol{b_3}-\boldsymbol{b_4}\right ) \right | }$, $\boldsymbol{a_2}=\frac{T\left (\boldsymbol{b_1}-\boldsymbol{b_2}+\boldsymbol{b_3}-\boldsymbol{b_4}\right ) }{\left | T\left (\boldsymbol{b_1}-\boldsymbol{b_2}+\boldsymbol{b_3}-\boldsymbol{b_4}\right ) \right | }$, $\boldsymbol{a_3}=\frac{T\left (\boldsymbol{b_1}-\boldsymbol{b_2}-\boldsymbol{b_3}+\boldsymbol{b_4}\right ) }{\left | T\left (\boldsymbol{b_1}-\boldsymbol{b_2}-\boldsymbol{b_3}+\boldsymbol{b_4}\right ) \right | }$. According to the Cauchy-Schwartz inequality, the second inequality in \eqref{9} is saturated if $\frac{\lambda_1}{\left |  \boldsymbol{b_1}+\boldsymbol{b_2}-\boldsymbol{b_3}-\boldsymbol{b_4} \right | } = \frac{\lambda_2}{\left |  \boldsymbol{b_1}-\boldsymbol{b_2}+\boldsymbol{b_3}-\boldsymbol{b_4} \right | } = \frac{\lambda_3}{\left |  \boldsymbol{b_1}-\boldsymbol{b_2}-\boldsymbol{b_3}+\boldsymbol{b_4} \right | }$. Moreover, to obtain the maximum value, we require the condition $\left \langle \boldsymbol{b_1},\boldsymbol{b_2} \right \rangle+\left \langle \boldsymbol{b_1},\boldsymbol{b_3} \right \rangle +\left \langle \boldsymbol{b_1},\boldsymbol{b_4} \right \rangle + \left \langle \boldsymbol{b_2},\boldsymbol{b_3} \right \rangle +\left \langle \boldsymbol{b_2},\boldsymbol{b_4} \right \rangle +\left \langle \boldsymbol{b_3},\boldsymbol{b_4} \right \rangle  =-2$.
	
As discussed above, we can conclude that the tight upper bound of the maximal quantum violation is achieved when Alice's and Bob's measurements exhibit specific geometric configurations in the Bloch ball.  The six eigenstates of Alice’s projective measurements form a complete set of three mutually unbiased bases, with each basis corresponding to a pair of opposite corners of an octahedron in the Bloch sphere, as shown in Fig \ref{1}. The eight eigenstates of Bob’s projective measurements can be grouped into two positive operator-valued measures (POVMs): one formed by the $+1$ outcome projectors and the other by the $-1$ outcome projectors. The $+1$ and $-1$ outcomes of each projective measurement correspond to a pair of opposite corners of a cuboid in the Bloch sphere, as shown in Fig \ref{2}.
    \begin{figure*}[t]
    \centering
    \begin{subfigure}[b]{0.48\linewidth}
        \centering
        \includegraphics[width=\linewidth]{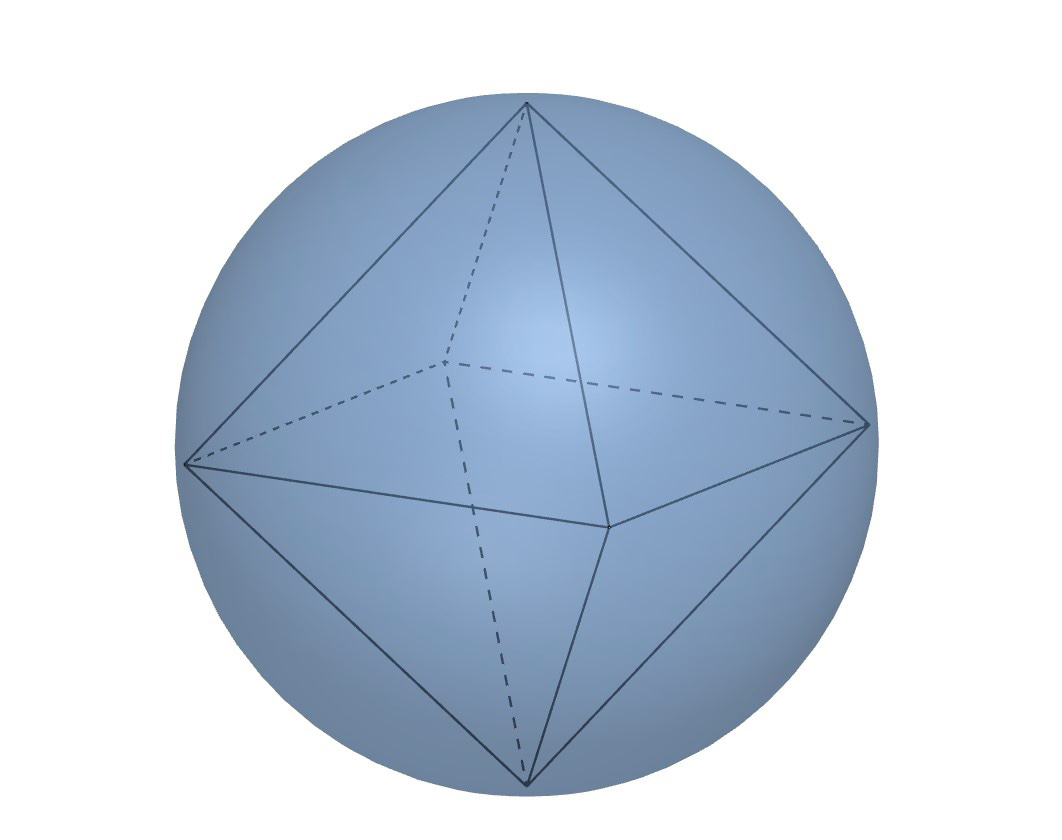}
        \caption{The octahedron in Alice's Bloch sphere.}
        \label{1}
    \end{subfigure}
    \hfill
    \begin{subfigure}[b]{0.48\linewidth}
        \centering
        \includegraphics[width=\linewidth]{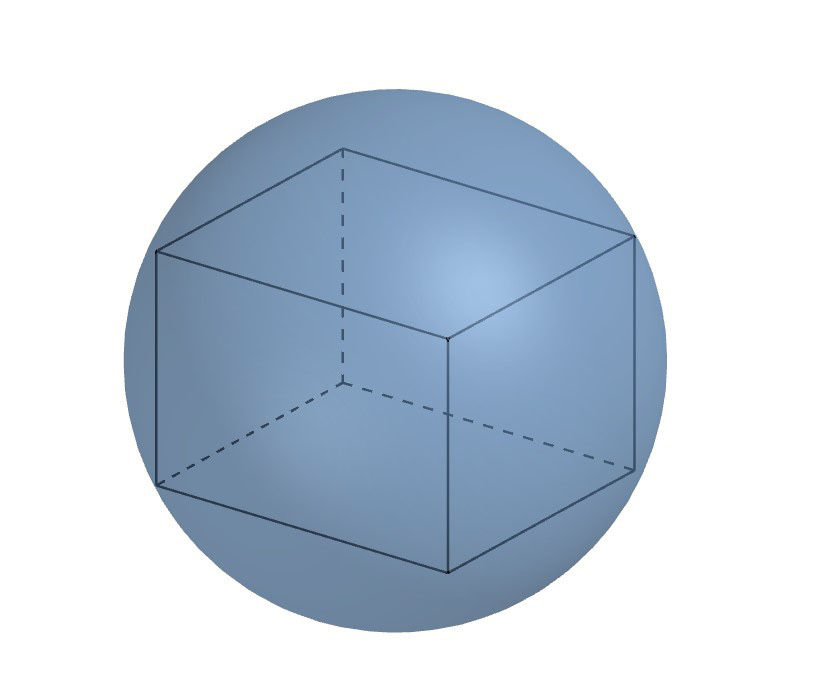}
        \caption{The cuboid in Bob's Bloch sphere.}
        \label{2}
    \end{subfigure}
    \caption{(a) Alice’s measurement eigenstates are positioned at the vertices of an octahedron. (b) Bob’s measurement eigenstates are positioned at the vertices of a cuboid.}
    
\end{figure*}

\textit{Example 1.}  
 Consider the singlet state
    \begin{equation}
    	\left |\psi^-  \right \rangle =\frac{1}{\sqrt{2}}( \left | 01 \right \rangle -\left | 10 \right \rangle), 
    \end{equation}
the correlation matrix of $\rho= \lvert \psi^-  \rangle  \langle \psi^- \lvert  $ is
    \begin{equation}
    	T=\begin{pmatrix}
    		-1& & \\
    		& -1& \\
    		& &-1
    	\end{pmatrix},
    \end{equation}
the singular values of the matrix $T$ are $1, 1,1$, thus
    \begin{equation}
    	Q\left (S  \right )_{\rho}\le 4\sqrt{3}. 
    \end{equation}
The maximal quantum violation of \(4\sqrt{3}\) can be achieved when Alice’s measurements are given by \eqref{3} and Bob’s measurements are given by \eqref{4}. \eqref{3} and \eqref{4} satisfy the conditions for the tightness of the upper bound (the measurement strategy in the following examples also meets the tightness conditions), demonstrating that the upper bound in \eqref{6} is tight.
    
\textit{Example 2.}  
Consider a pure two-qubit state, which can always be written as
    \begin{equation}
    	\left |\psi_\theta  \right \rangle =\cos{\theta}\left | 00 \right \rangle +\sin{\theta}\left | 11 \right \rangle,  \label{21}
    \end{equation}
where $\theta \in\left [0,\frac{\pi}{4}  \right ] $, the correlation matrix of $\rho\left (\theta  \right ) = \lvert \psi_\theta  \rangle  \langle \psi_\theta \lvert  $ is
    \begin{equation}
    	T=\begin{pmatrix}
    		\sin 2\theta& & \\
    		& \sin 2\theta& \\
    		& &1
    	\end{pmatrix},
    \end{equation}
the singular values of the matrix $T$ are $\sin 2\theta, \sin 2\theta,1$, hence
    \begin{equation}
    	Q\left (S  \right )_{\rho\left (\theta  \right ) }\le 4\sqrt{1+2\sin^{2} 2\theta}. \label{23}
    \end{equation}
The maximal quantum violation $4\sqrt{1+2\sin^{2} 2\theta}$ is achieved when Alice and Bob use the measurements given by \eqref{27} and \eqref{28}, confirming that the upper bound also tightly applies to the pure two-qubit states.
    \begin{equation}
    	A_1=\sigma_1, A_2=\sigma_2, A_3=\sigma_3, \label{27}
    \end{equation}
    \begin{subequations}
    	\begin{equation}
    		B_1=\frac{1}{\sqrt{1+2\sin^22\theta}}\left (\sin2\theta\sigma_1-\sin2\theta\sigma_2+\sigma_3 \right ) ,
    	\end{equation}
    	\begin{equation}
    		B_2=\frac{1}{\sqrt{1+2\sin^22\theta}}\left (\sin2\theta\sigma_1+\sin2\theta\sigma_2-\sigma_3  \right ) ,
    	\end{equation}
    	\begin{equation}
    		B_3=\frac{1}{\sqrt{1+2\sin^22\theta}}\left (-\sin2\theta\sigma_1-\sin2\theta\sigma_2-\sigma_3  \right ) ,
    	\end{equation}
    	\begin{equation}
    		B_4=\frac{1}{\sqrt{1+2\sin^22\theta}}\left (-\sin2\theta\sigma_1+\sin2\theta\sigma_2+\sigma_3  \right ) .
    	\end{equation} \label{28}
    \end{subequations} 
    
\textit{Example 3.}
Consider the Werner state
    \begin{equation}
    	\rho\left (p  \right )=p \lvert \phi^+  \rangle \langle \phi^+\lvert +\left (1-p  \right ) \frac{I}{4},
    \end{equation}
where $0 \le\ p \le 1$, and $I$ is the identity matrix on $\mathbb{C}^2 \otimes \mathbb{C}^2$. The correlation matrix of $\rho\left (p  \right )$ is
    \begin{equation}
    	T=\begin{pmatrix}
    		p& & \\
    		& p& \\
    		& &p
    	\end{pmatrix},
    \end{equation}
the singular values of the matrix $T$ are $p, p, p$. Therefore 
    \begin{equation}
    	Q\left (S  \right )_{\rho\left (p  \right )} \le4\sqrt{3}p.
    \end{equation}
The maximal quantum violation $4\sqrt{3}p$ is achieved when Alice's measurements are defined by \eqref{3} and Bob's measurements by \eqref{4}, which shows that the upper bound also tightly applies to the Werner states. 
    
\section{Randomness certification based on EBI}
    
The violation of Bell inequality implies the existence of inherent randomness \cite{herrero2017quantum}. In this section, we establish a lower bound on global randomness based on the maximal quantum violation of the EBI. This randomness is device-independent, as we do not make any assumptions about the sources and measurements of the device.
    
In the quantum region, for a given quantum distribution $P=\left \{ p(ab|xy) \right \}$, the randomness of output pairs $(a,b)$ conditioned on input pairs $(x,y)$ represents global randomness, quantified in terms of min-entropy
    \begin{equation}
    	H_\infty \left (AB|XY\right ) =-\log_2\max_{ab} P\left ( ab|xy \right ) .
    \end{equation}
    
We aim to establish the lower bound on global randomness based on Bell inequality violation $I$ using SDP. In the device-independent scenario, where the internal device structure is not considered, maximizing $P\left ( ab|xy \right ) $ requires running over all quantum realizations compatible with $P$. However, this task is computationally challenging due to the complexity of considering all quantum realizations. Therefore, we employ numerical method via SDP to obtain the maximum value of $P\left ( ab|xy \right ) $ posed by the following optimization problem:
    \begin{equation}
    	P^{*}(ab|xy)=\max_{\rho,M_{x}^{a},M_{y}^{b} }P(ab|xy), 
    \end{equation}
    \begin{equation}
    	s.t. \quad \sum_{abcy}c_{abxy}P(ab|xy)=I,
    \end{equation}
    \begin{equation}
    	P(ab|xy)=tr(\rho M_{x}^{a}\otimes M_{y}^{b}).
    \end{equation}

We consider the pure two-qubit state $\left |\psi_\theta  \right \rangle =\cos{\theta}\left | 00 \right \rangle +\sin{\theta}\left | 11 \right \rangle$, where the parameter $\theta$ determines amount of entanglement  in the two-qubit state. The state is maximally entangled  at $\theta = \frac{\pi}{4}$. The maximal quantum violation of EBI is $4\sqrt{1+2\sin^{2} 2\theta}$. The CHSH inequality is violated for $\theta>0$, while the EBI is violated for $\theta>0.456$. We optimize the lower bound of the global randomness based on the quantum violation $I$ of Bell inequality on the Q2 level of the  Navascués-Pironio-Acín (NPA) hierarchy\cite{navascues2008convergent}. As shown in Fig \ref{fig2}, randomness can only be extracted when the Bell inequality is violated. The 1.34 bits of randomness can be certified from the maximal violation of the EBI. For all $\theta>0.67$, the lower bound of randomness certified by the EBI exceeds that of the CHSH inequality, indicating that EBI performs better in randomness generation rates. 
   \begin{figure}[H]
    	\centering
    	\includegraphics[width=200pt]{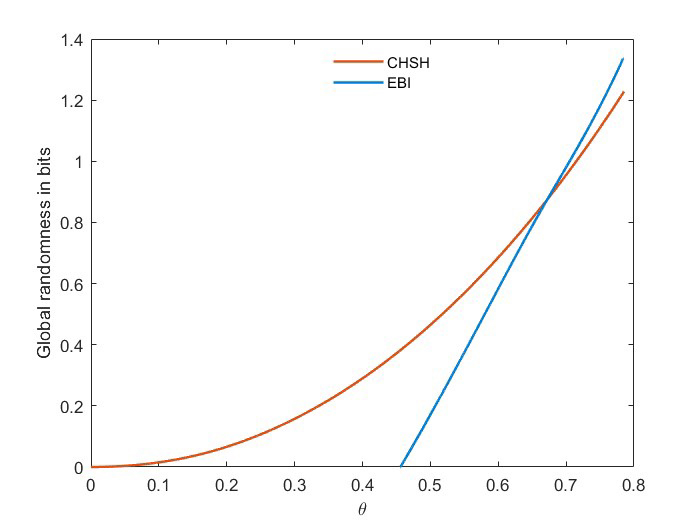}
    	\caption{Lower bounds on the randomness as a function of the parameter 
    		$\theta$ of the state. The orange curve is obtained by the CHSH inequality violation, while the blue curve is obtained by the EBI violation.}
    	\label{fig2}
    \end{figure}
Similarly, we consider the Werner state $\rho\left (p  \right )=p \lvert \phi^+  \rangle  \langle \phi^+ \lvert +\left (1-p  \right ) \frac{I}{4}$. As illustrated in Fig \ref{fig3}, if $p>0.94$, the extractable randomness using the EBI surpasses that of obtained from the chained inequality with three measurements on each side. Furthermore, for $p>0.965$, the extractable randomness using the EBI exceeds that of the CHSH inequality. The maximum global randomness certified by the CHSH inequality, chained inequality, and EBI is 1.23 bits, 1.1 bits, and 1.34 bits, respectively. Therefore, for a given $p$ approaching 1, the EBI surpasses both the CHSH inequality and the chained inequality in terms of randomness generation rates.
 \begin{figure}[H]
    	\centering
    	\includegraphics[width=200pt]{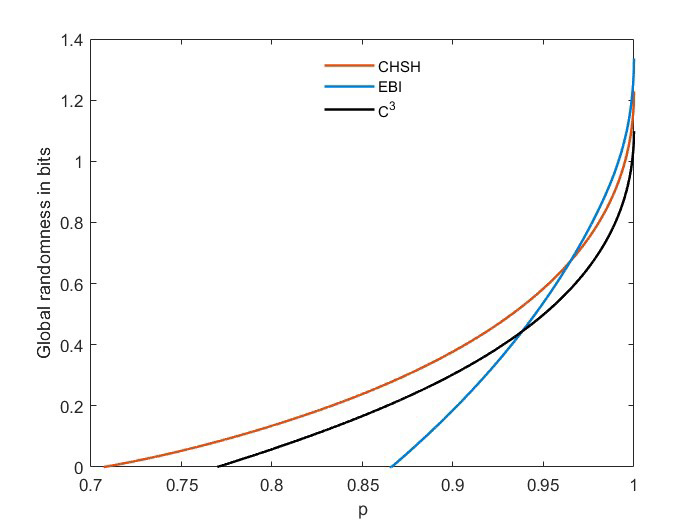}
    	\caption{Lower bounds on the randomness as a function of the noise level $p$  of the state. The curves are obtained by the CHSH inequality violation (orange), the EBI violation (blue), and the chained inequality violation with three measurements each party (black), respectively.}
    	\label{fig3}
    \end{figure}
    
\section{Conclusion and discussion}
Through singular value decomposition and the solving of SDP, we have deduced a tight upper bound on the maximal quantum violation of the EBI and identified the conditions to achieve this bound. By analyzing the optimal measurement strategy, we calculate the tight bound for pure two-qubit states and Werner states, and explore its application in device-independent randomness certification. By comparing the lower bounds on the global randomness certified by the EBI with those of the CHSH and chained inequalities, based on the tight upper bound, we find that the EBI has higher randomness generation rates than both CHSH and chained inequalities in certain regions. This result provides a theoretical foundation for the design of device-independent random number generators.
    
The global randomness certification in our work is based on the quantum violation of the EBI at the Q2 level of the NPA hierarchy. Currently, there is no analytical functional relationship between global randomness and the violation of Bell inequalities. We leave the research to deduce the relationship between global randomness and the violation of EBI for future work.

Note added: During the review process of this work (submitted to EPJ Quantum Technology on July 24, 2024), we became aware of Ref. \cite{bhowmick2024necessary}. While it presents some overlapping results, our proof technique, which is a central conceptual contribution of this work, differs substantially. Specifically, our approach leverages semidefinite programming and matrix singular value decomposition besides the Cauchy-Schwarz inequality. Additionally, we apply our findings to device-independent randomness certification, which constitutes another significant contribution of this work.

\makeatletter
\renewcommand\@biblabel[1]{#1.}
\makeatother
\small\bibliography{sn-articl}

\begin{thebibliography}{10}
\providecommand{\doi}[1]{\url{https://doi.org/#1}}
\bibcommenthead

\bibitem[\protect\citeauthoryear{Bell}{1964}]{bell1964einstein}
Bell JS.
\newblock On the Einstein-Podolsky-Rosen paradox.
\newblock Phys Phys Fiz. 1964;1:195.
\newblock \doi{https://doi.org/10.1103/PhysicsPhysiqueFizika.1.195}.

\bibitem[\protect\citeauthoryear{Clauser et~al.}{1969}]{clauser1969proposed}
Clauser JF, Horne MA, Shimony A, Holt RA.
\newblock Proposed experiment to test local hidden-variable theories.
\newblock Phys Rev Lett. 1969;23(15):880.
\newblock \doi{https://doi.org/10.1103/PhysRevLett.23.880}.

\bibitem[\protect\citeauthoryear{Acin et~al.}{2012}]{acin2012randomness}
Acin A, Massar S, Pironio S.
\newblock Randomness versus nonlocality and entanglement.
\newblock Phys Rev Lett. 2012;108(10):100402.
\newblock \doi{10.1103/PhysRevLett.108.100402}.

\bibitem[\protect\citeauthoryear{Myrvold et~al.}{2009}]{myrvold2009bell}
Myrvold WC, Christian J, Gisin N.
\newblock Bell inequalities: many questions, a few answers.
\newblock Quantum Reality, Relativistic Causality, and Closing the Epistemic Circle: Essays in Honour of Abner Shimony. 2009;p. 125--138.

\bibitem[\protect\citeauthoryear{Braunstein and Caves}{1990}]{braunstein1990wringing}
Braunstein SL, Caves CM.
\newblock Wringing out better Bell inequalities.
\newblock Ann Phys. 1990;202(1):22--56.
\newblock \doi{https://doi.org/10.1016/0003-4916(90)90339-P}.

\bibitem[\protect\citeauthoryear{Tavakoli et~al.}{2022}]{tavakoli2022bell}
Tavakoli A, Pozas-Kerstjens A, Luo MX, Renou MO.
\newblock Bell nonlocality in networks.
\newblock Rep Prog Phys. 2022;85(5):056001.
\newblock \doi{10.1088/1361-6633/ac41bb}.

\bibitem[\protect\citeauthoryear{Brunner et~al.}{2014}]{brunner2014bell}
Brunner N, Cavalcanti D, Pironio S, Scarani V, Wehner S.
\newblock Bell nonlocality.
\newblock Rev Mod Phys. 2014;86(2):419.
\newblock \doi{10.1103/RevModPhys.86.419}.

\bibitem[\protect\citeauthoryear{Arnon-Friedman and Leditzky}{2021}]{arnon2021upper}
Arnon-Friedman R, Leditzky F.
\newblock Upper bounds on device-independent quantum key distribution rates and a revised Peres conjecture.
\newblock IEEE Trans Inf Theory. 2021;67(10):6606--6618.
\newblock \doi{10.1109/TIT.2021.3086505}.

\bibitem[\protect\citeauthoryear{Su}{2023}]{su2023monte}
Su HY.
\newblock Monte Carlo approach to the evaluation of the security of device-independent quantum key distribution.
\newblock New J Phys. 2023;25(12):123036.
\newblock \doi{10.1088/1367-2630/ad141a}.

\bibitem[\protect\citeauthoryear{Lukanowski et~al.}{2023}]{lukanowski2023upper}
Lukanowski K, Balanzo-Juando M, Farkas M, Acin A, Kolodynski J.
\newblock Upper bounds on key rates in device-independent quantum key distribution based on convex-combination attacks.
\newblock Quantum. 2023;7:1199.
\newblock \doi{10.22331/q-2023-12-06-1199}.

\bibitem[\protect\citeauthoryear{Pironio and Massar}{2013}]{pironio2013security}
Pironio S, Massar S.
\newblock Security of practical private randomness generation.
\newblock Phys Rev A. 2013;87(1):012336.
\newblock \doi{10.1103/PhysRevA.87.012336}.

\bibitem[\protect\citeauthoryear{Pironio et~al.}{2010}]{pironio2010random}
Pironio S, Acin A, Massar S, de~La~Giroday AB, Matsukevich DN, Maunz P, et~al.
\newblock Random numbers certified by Bell’s theorem.
\newblock Nature. 2010;464(7291):1021--1024.
\newblock \doi{10.1038/nature09008}.

\bibitem[\protect\citeauthoryear{Acin et~al.}{2016}]{acin2016optimal}
Acin A, Pironio S, Vertesi T, Wittek P.
\newblock Optimal randomness certification from one entangled bit.
\newblock Phys Rev A. 2016;93(4):040102.
\newblock \doi{10.1103/PhysRevA.93.040102}.

\bibitem[\protect\citeauthoryear{Andersson et~al.}{2018}]{andersson2018device}
Andersson O, Badzi{\k{a}}g P, Dumitru I, Cabello A.
\newblock Device-independent certification of two bits of randomness from one entangled bit and Gisin's elegant Bell inequality.
\newblock Phys Rev A. 2018;97(1):012314.
\newblock \doi{10.1103/PhysRevA.97.012314}.

\bibitem[\protect\citeauthoryear{Woodhead et~al.}{2020}]{woodhead2020maximal}
Woodhead E, Kaniewski J, Bourdoncle B, Salavrakos A, Bowles J, Acin A, et~al.
\newblock Maximal randomness from partially entangled states.
\newblock Phys Rev Res. 2020;2(4):042028.
\newblock \doi{10.1103/PhysRevResearch.2.042028}.

\bibitem[\protect\citeauthoryear{Xiao et~al.}{2023}]{xiao2023device}
Xiao Y, Li X, Wang J, Li M, Fei SM.
\newblock Device-independent randomness based on a tight upper bound of the maximal quantum value of chained inequality.
\newblock Phys Rev A. 2023;107(5):052415.
\newblock \doi{10.1103/PhysRevA.107.052415}.

\bibitem[\protect\citeauthoryear{Liu et~al.}{2024}]{liu2024quantifying}
Liu X, Wang Y, Han Y, Wu X.
\newblock Quantifying the intrinsic randomness in sequential measurements.
\newblock New J Phys. 2024;26(1):013026.
\newblock \doi{10.1088/1367-2630/ad19fe}.

\bibitem[\protect\citeauthoryear{Svetlichny}{1987}]{svetlichny1987distinguishing}
Svetlichny G.
\newblock Distinguishing three-body from two-body nonseparability by a Bell-type inequality.
\newblock Phys Rev D. 1987;35(10):3066.
\newblock \doi{10.1103/PhysRevD.35.3066}.

\bibitem[\protect\citeauthoryear{Mermin}{1990}]{mermin1990extreme}
Mermin ND.
\newblock Extreme quantum entanglement in a superposition of macroscopically distinct states.
\newblock Phys Rev Lett. 1990;65(15):1838.
\newblock \doi{10.1103/PhysRevLett.65.1838}.

\bibitem[\protect\citeauthoryear{Salavrakos et~al.}{2017}]{salavrakos2017bell}
Salavrakos A, Augusiak R, Tura J, Wittek P, Acin A, Pironio S.
\newblock Bell inequalities tailored to maximally entangled states.
\newblock Phys Rev Lett. 2017;119(4):040402.
\newblock \doi{10.1103/PhysRevLett.119.040402}.

\bibitem[\protect\citeauthoryear{Augusiak et~al.}{2019}]{augusiak2019bell}
Augusiak R, Salavrakos A, Tura J, Acin A.
\newblock Bell inequalities tailored to the Greenberger--Horne--Zeilinger states of arbitrary local dimension.
\newblock New J Phys. 2019;21(11):113001.
\newblock \doi{10.1088/1367-2630/ab4d9f}.

\bibitem[\protect\citeauthoryear{Aolita et~al.}{2015}]{aolita2015open}
Aolita L, De~Melo F, Davidovich L.
\newblock Open-system dynamics of entanglement: a key issues review.
\newblock Rep Prog Phys. 2015;78(4):042001.
\newblock \doi{10.1088/0034-4885/78/4/042001}.

\bibitem[\protect\citeauthoryear{Piccolini et~al.}{2024}]{piccolini2024asymptotically}
Piccolini M, Giovannetti V, Lo~Franco R.
\newblock Asymptotically deterministic robust preparation of maximally entangled bosonic states.
\newblock Phys Rev Res. 2024;6(1):013073.
\newblock \doi{10.1103/PhysRevResearch.6.013073}.

\bibitem[\protect\citeauthoryear{Li et~al.}{2017}]{li2017tight}
Li M, Shen S, Jing N, Fei SM, Li-Jost X.
\newblock Tight upper bound for the maximal quantum value of the Svetlichny operators.
\newblock Phys Rev A. 2017;96(4):042323.
\newblock \doi{10.1103/PhysRevA.96.042323}.

\bibitem[\protect\citeauthoryear{Siddiqui and Sazim}{2019}]{siddiqui2019tight}
Siddiqui MA, Sazim S.
\newblock Tight upper bound for the maximal expectation value of the Mermin operators.
\newblock Quant Inf Process. 2019;18:1--11.
\newblock \doi{10.1007/s11128-019-2246-1}.

\bibitem[\protect\citeauthoryear{Andersson et~al.}{2017}]{andersson2017self}
Andersson O, Badzi{\k{a}}g P, Bengtsson I, Dumitru I, Cabello A.
\newblock Self-testing properties of Gisin's elegant Bell inequality.
\newblock Phys Rev A. 2017;96(3):032119.
\newblock \doi{10.1103/PhysRevA.96.032119}.

\bibitem[\protect\citeauthoryear{Navascues et~al.}{2007}]{navascues2007bounding}
Navascues M, Pironio S, Acin A.
\newblock Bounding the set of quantum correlations.
\newblock Phys Rev Lett. 2007;98(1):010401.
\newblock \doi{10.1103/PhysRevLett.98.010401}.

\bibitem[\protect\citeauthoryear{Navascues et~al.}{2008}]{navascues2008convergent}
Navascues M, Pironio S, Acin A.
\newblock A convergent hierarchy of semidefinite programs characterizing the set of quantum correlations.
\newblock New J Phys. 2008;10(7):073013.
\newblock \doi{10.1088/1367-2630/10/7/073013}.

\bibitem[\protect\citeauthoryear{Herrero-Collantes and Garcia-Escartin}{2017}]{herrero2017quantum}
Herrero-Collantes M, Garcia-Escartin JC.
\newblock Quantum random number generators.
\newblock Rev Mod Phys. 2017;89(1):015004.
\newblock \doi{10.1103/RevModPhys.89.015004}.

\bibitem[\protect\citeauthoryear{Bhowmick et~al.}{2024}]{bhowmick2024necessary}
Bhowmick S, Kanjilal S, Pan AK, Sasmal S.
\newblock Necessary and sufficient state condition for violation of a Bell inequality with multiple measurement settings.
\newblock Phys Rev A. 2024;110(5):052432.
\newblock \doi{10.1103/PhysRevA.110.052432}.

\end{thebibliography}
\end{document}